\newcommand{\qedwhite}{\hfill \ensuremath{\Box}}
\newtheorem{assumption}{Assumption}
\newcommand\cq{\mathcal{Q}}
\newcommand\cb{\mathcal{B}}
\newcommand\ck{\mathcal{K}}
\DeclareMathOperator{\lcm}{lcm}
\begin{document}
\title{Deception in Oligopoly Games via Adaptive Nash Seeking Systems\thanks{Supported by NSF}}
%
%
\author{Michael Tang\inst{1} \and
Miroslav Krstic\inst{2} \and
Jorge Poveda\inst{1}}
\authorrunning{M. Tang et al.}
%
\institute{
Department of Electrical and Computer Engineering, University of California-San Diego, La Jolla, California, USA\\
\email{\{myt001, jipoveda\}@ucsd.edu}\and Department of Mechanical and Aerospace Engineering, University of California-San Diego, La Jolla, California, USA\\
\email{mkrstic@ucsd.edu}}
\maketitle              
\begin{abstract}
In the theory of multi-agent systems, deception refers to the strategic manipulation of information to influence the behavior of other agents, ultimately altering the long-term dynamics of the entire system. Recently, this concept has been examined in the context of model-free Nash equilibrium seeking (NES) algorithms for noncooperative games \cite{tang2024deception}. Specifically, it was demonstrated that players can exploit knowledge of other players' exploration signals to drive the system toward a ``deceptive" Nash equilibrium, while maintaining the stability of the closed-loop system. To extend this insight beyond the duopoly case, in this paper we conduct a comprehensive study of deception mechanisms in  \emph{$N$-player oligopoly markets}. By leveraging the structure of these games and employing stability techniques for nonlinear dynamical systems, we provide game-theoretic insights into deception and derive specialized results, including stability conditions. These results allow players to systematically adjust their NES dynamics by tuning gains and signal amplitudes, all while ensuring closed-loop stability. Additionally, we introduce novel sufficient conditions to demonstrate that the (practically) stable equilibrium point of the deceptive dynamics corresponds to a true Nash equilibrium of a different game, which we term the ``deceptive game." Our results show that, under the proposed adaptive dynamics with deception, a victim firm may develop a distorted perception of its competitors' product appeal, which could lead to setting suboptimal prices.

\keywords{Nash equilibrium seeking  \and Deception \and Oligopoly.}
\end{abstract}
\section{Introduction}
\subsection{Motivation}
The study of multi-agent systems (MAS) is gaining increasing significance, particularly in engineering fields such as smart grids, robotics, and machine learning. Game theory offers a robust framework for analyzing the strategic interactions between rational decision-makers in such systems. However, traditional game-theoretic concepts, such as Nash equilibrium \cite{540b73bd-a3f1-333e-a206-c24d0fbbb8bc}, may need to be revisited in scenarios where agents have access to \emph{privileged information}. In the context of learning in games \cite{young2004strategic}, such asymmetry of information can be exploited to manipulate the system's long-term behavior, leading to outcomes that benefit some agents while disadvantaging others. For instance, in the context of Nash equilibrium seeking in non-cooperative games \cite{bacsar1998dynamic,6060862}, privileged agents can exploit knowledge of other agents' exploration policies to interfere with their learning processes, without altering their own learning capabilities. When this interference maintains the overall system’s stability, it can cause the naive agents to converge to incorrect steady-state models or beliefs . This phenomenon, known as \emph{deception in games}, has received significant attention during the last years due to its potential implications in the context of cyber-security and resilient decision making in socio-technical systems. Algorithmic deception has been studied across various domains, including robotics and aerospace control \cite{dec_thesis,ho2022game,dragan2015deceptive}.  Similarly, studies such as \cite{wagner2011acting} propose algorithms for robots to decide when to deceive, as illustrated in hide-and-seek experiments. While deception can aid robots in achieving particular goals in non-competitive environments, it has also been explored in competitive scenarios, such as signaling games  \cite{kouzehgar2019fuzzy}. These concepts are further explored in works investigating deception in multi-agent systems, offering strategies to counter deceptive signals and attacks \cite{5339236}. Deception has also been studied in the context of biological systems \cite{smith1987deception} and societal systems \cite{mitchell1986deception}.

In this paper, we focus on studying deception in Nash equilibrium-seeking (NES) problems within non-cooperative games. In these scenarios, a finite number of agents, or players, seek to maximize their individual profits, which depend not only on their own actions but also on the actions of others. Given the challenge associated with computation of Nash equilibria \cite{540b73bd-a3f1-333e-a206-c24d0fbbb8bc} in noncooperative games, the study of NES algorithms has become a very hot research topic. Various algorithms have been designed, including distributed \cite{ye2017distributed,YI2019111,ye2023distributed}, semi-decentralized \cite{belgioioso2017semi,zou2022semidecentralized} and hybrid \cite{9968117,wang2020distributed} algorithms. However, in some cases, the agents do not have precise knowledge of their cost functions, and hence they must rely on adaptive seeking dynamics that incorporate exploration and exploitation strategies. To address this, extremum-seeking based NES dynamics were introduced in \cite{6060862} and have been extended to stochastic settings \cite{doi:10.1137/100811738}, systems with delays \cite{oliveira2021nash}, nonsmooth algorithms \cite{9760031}, etc. In this setting, agents with privileged knowledge of the exploration policy used by others can manipulate their own exploration policy to induce false beliefs in the other agents, making them take actions that are detrimental for them. Such type of deception was recently introduced and studied in \cite{tang2024deception} for a general class of games, establishing conditions that preserve stability in the overall system. In particular, it was shown that when players in a non-cooperative game implement the model-free Nash equilibrium seeking (NES) scheme from \cite{6060862}, a player who gains insight into another player's exploration policy can manipulate the Nash equilibrium to their advantage. The proposed deception mechanism involves an additive and dynamic modification to the deceiver’s action update, incorporating the victim’s exploration frequency. This modification is adjusted using first or second-order dynamics. In the context of duopoly games, deceptive players have been shown to manipulate the victim's perception of their sales function. Using an averaging and singular perturbation approach, it was further shown that the proposed deception mechanisms preserve stability in non-cooperative games with general nonlinear payoffs, including strongly monotone games. While the results in \cite{tang2024deception} were the first to establish dynamic deception through model-free NES dynamics, their general applicability is constrained by certain assumptions. For example, the ``stability-preserving set" described in \cite{tang2024deception}, a key element in the singular perturbation analysis, is only guaranteed to cover a small neighborhood around a specific point. Additionally, although the work in \cite{tang2024deception} relaxes the diagonal dominance condition from \cite{6060862}, it does so at the cost of requiring all players to use identical gains and amplitudes in their NES dynamics.

\subsection{Our Contributions}
In this paper, we introduce several new results that relax some of the previous assumptions considered in the literature \cite{tang2024deception} , and, additionally, we introduce new results and computations in the context of general $N$-player  $N$-player oligopoly markets, such as those studied in \cite{6060862}. Furthermore, we exploit the structure of the oligopoly market to derive sharper stability results and characterizations that quantify the influence of deception in this context. We also provide a broader estimate for the stability-preserving set in the nominal average dynamics and offer results that determine when the new equilibrium point under deception is truly a Nash equilibrium, rather than just an equilibrium, for the deceptive game. This framework further allows us to consider a more general class of NES dynamics, where players are permitted to use different gains and amplitudes in their update laws. The effectiveness of these results are also demonstrated through numerical simulations.
\section{System Model and Oligopoly Formulation}
In this section, we describe the types of games considered in this paper, as well as the model-free deception dynamics.
\subsection{NES for the oligopoly}
Consider an $N$-player noncooperative game, where player $i$ implements action $x_i\in\mathbb{R}$ and aims to unilaterally minimize their cost function $J_i:\mathbb{R}^N\to\mathbb{R}$. We use $[N]:=\{1,2,...,N\}$ to denote the set of players, and we let $x=[x_1,...,x_N]^\top$ denote the vector of players' actions. Similarly we use $x_{-i}\in\mathbb{R}^{N-1}$ to denote the vector of all players' actions except for the action of player $i$. Given real-valued cost functions $J_i(x_i,x_{-i}):\mathbb{R}^N\to\mathbb{R}$, for all $i$, a policy $x^*\in\mathbb{R}^N$ is called a \emph{Nash equilibrium} if it satisfies
\begin{equation}\label{eqNE}
x_i^*=\text{arg}\min_{x_i}J_i(x_i,x_{-i}^*),~~\forall~i\in [N].
\end{equation}
We define the \emph{pseudogradient} of the game to be $\mathcal{G}(x):=[\nabla_1 J_1(x),...,\nabla_N J_N(x)]^\top$ where $\nabla_i J_i(x)$ is the partial derivative of $J_i(x)$ with respect to $x_i$. Given $v\in\mathbb{R}^N$ we use $\text{diag}(v)\in\mathbb{R}^{N\times N}$ to denote the diagonal matrix with $i$-th diagonal element given by $v_i$. Since we focus on oligopoly games where each player \( i \) controls the price \( x_i \) of their own product, the cost functions of interest take the form \cite{6060862}:
\begin{equation}
    J_i(x(t))=-s_i(x(t))(x_i(t)-m_i)\label{jcost}
\end{equation}
where $m_i$ is the marginal cost of the product generated  by the $i^{th}$ player, and $s_i$ is their sales function, which is given by
\begin{equation}
    s_i(x(t))=\frac{R_{\|}}{R_i}\left(S_d-\frac{x_i(t)}{\overline{R}_i}+\sum_{j\neq i}^{N}\frac{x_j(t)}{R_j}\right)\label{salesi}.
\end{equation}
Here, $S_d$ is the total consumer demand and $R_i$ represents the resistance" that consumers have towards buying the product offered by firm $i$. In other words, the desirability of product $i$ is inversely proportional to $R_i$. The quantities $R_{\|}$ and $\overline{R}_i$ are given by
\begin{align}
        \frac{1}{R_{\|}}=\sum_{k=1}^N\frac{1}{R_k},~~~\text{and}~~\quad
        \frac{1}{\overline{R}_i}=\sum_{k\neq i}^N\frac{1}{R_k}.
\end{align} 
As in \cite{6060862}, the sales function is motivated by an electrical circuit analogy, illustrated in Figure \ref{circ_fig}.
\begin{figure}[H]
\centering
\includegraphics[width=0.55\textwidth]{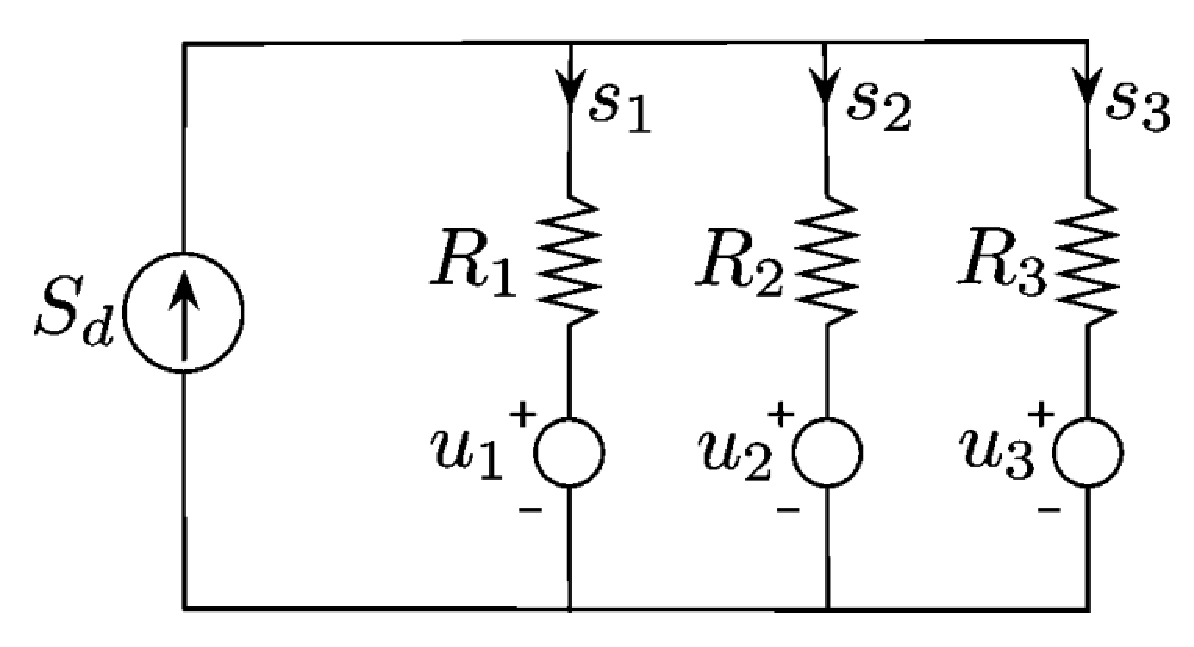}
\caption{Market-circuit analogy borrowed from \cite{6060862}, which models the sales $s_1, s_2, s_3$ of a three-player oligopoly as currents in a 3-resistor parallel circuit.} \label{circ_fig}
\end{figure}
In Figure \ref{circ_fig}, the total demand $S_d$ can be thought of as a current generator, the prices $x_i$ are voltage sources, and $R_i$ is the resistance towards product $i$. It is also important to note that $s_i(x)(x_i-m_i)$ represents the profit of firm $i$, but we define $J_i$ as \eqref{jcost} since, without loss of generality, we see $J_i$ as a \emph{cost} to be \emph{minimized} by the $i^{th}$ player. 

To converge to a neighborhood of a ``standard'' Nash equilibrium by only using measurements of their cost $J_i$, players can implement the following model-free NES dynamics introduced in \cite{6060862}:
\begin{subequations}\label{extsc0}
 \begin{align}
     x_i(t)&=u_i(t)+a_i\sin(\omega_i t)\label{extscx}\\
     \dot{u}_i(t)&=-\frac{2k_i}{a_i} J_i(x(t))\sin(\omega_i t),\label{extscu}
 \end{align}   
\end{subequations}
where the gains $a_i, k_i>0$ are positive tunable parameters, and the frequencies $\omega_i\in\mathbb{R}_{>0}$ are selected to satisfy the following assumption:
\begin{assumption}\label{assumpw}
The frequencies satisfy $\omega_i\neq \omega_j$ for $i\neq j$, and $\omega_i=\omega\overline{\omega}_i$, where $\omega\in\mathbb{R}_{>0}$ and $\overline{\omega}_i\in\mathbb{Q}_{>0}, \ \forall i\in[N]$.
\end{assumption}
%
It is easy to verify that the costs $J_i$ can be represented in the quadratic form
\begin{equation}
    J_i(x)=\frac12 x^\top Q_i x+b_i^\top x+\frac{R_{\|}S_d m_i}{R_i},
\end{equation}
where $Q_i\in\mathbb{R}^{N\times N}$ and $b_i\in\mathbb{R}^N$ have entries given by
\begin{subequations}
    \begin{align}
        [Q_i]_{jk}&=\begin{dcases}
            \frac{2R_{\|}}{R_i \overline{R}_i},\quad &\text{if } j=k=i\\
            -\frac{R_{\|}}{R_j R_k},\quad &\text{if } j\neq k \text{ and either } j=i \text{ or } k=i\\
            0,\quad &\text{else}
        \end{dcases}\label{qi}\\
        [b_i]_k&=\begin{dcases}
            -\frac{m_i R_{\|}}{R_i\overline{R}_i}-\frac{S_d R_{\|}}{R_i},\quad &\text{if } k=i\\
            \frac{m_i R_{\|}}{R_i R_k},\quad &\text{else}
        \end{dcases}
    \end{align}
\end{subequations}
Here, $[Q_i]_{jk}$ and $[b_i]_k$ denote the $(j,k)$ entry of $Q_i$ and $k$-th entry of $b_i$ respectively. It follows immediately that the pseudogradient of the oligopoly can be represented in the form $\mathcal{G}(x)=\mathcal{Q}x+\mathcal{B}$, where $\cq\in\mathcal{R}^{N\times N}$ and $\mathcal{B}\in\mathbb{R}^N$ satisfy $[\cq]_{i:}=[Q_i]_{i:}$ and $[\cb]_i=[b_i]_i$ for all $i$. Here, $[E]_{i:}$ denotes row $i$ of matrix $E$.
\subsection{Deception in the Oligopoly}
To incorporate deception into the NES dynamics \eqref{extsc0}, in this paper we consider the following modified model-free NES algorithm:
\begin{definition}
Player $i\in[N]$ is said to be deceptive towards a set of players $\mathcal{D}_i\subset [N]\setminus \{i\}$ if its actions are updated via the following rule:
\begin{subequations}\label{deceptiveNES}
\begin{align} 
    x_i(t)&=u_i(t)+a_i\sin(\omega_i t)+ \delta_i(t) \sum_{k\in\mathcal{D}_i}a_{k}\sin(\omega_{k} t)\label{decx}\\ 
    \dot{u}_i(t)&=-\dfrac{2k_i}{a_i} J_i(x(t))\sin(\omega_i t)
\end{align}
where $\delta_i(t)$ is a tuning deceptive gain that satisfies $\sup_{t\ge 0} |\delta_i(t)|>0$.
\end{subequations}
\end{definition}
In addition to knowing the frequency $\omega_i$, any player who wants to deceive player $i$ must also know the amplitude $a_i$. While this seem more restrictive than the deception considered in \cite{tang2024deception}, which only required knowledge of $\omega_i$, it is in fact less restrictive since the NES strategy in \cite{tang2024deception} assumed $a_k=a$ for all $k\in[N]$.

\vspace{0.2cm}
In our setting, we assume there are $n$ deceptive players, and the set of deceptive player is given by $\mathcal{D}=\{z_1,...,z_n\}$. We say player $i\in\mathcal{D}$ is deceiving $n_i\in[N]$ players in $\mathcal{D}_i:=\{d_{i,1},...,d_{i,n_i}\}$ if the parameter $\delta_i$ is updated according to:
\begin{equation}\label{deltadyn}
    \dot{\delta}_i=\varepsilon\varepsilon_i (J_i(x)-J_i^\text{ref}),\quad\varepsilon_i>0,\quad i\in\mathcal{D},
\end{equation}
where $J_i^\text{ref}$ is player $i$'s desired reference cost. For non-deceptive players, we have that $\delta_i:=0$. The full model-free NES dynamics can thus be stated as follows for all players $i\in[N]$:
\begin{subequations}\label{decgamedyn}
     \begin{align}
     x_i&=\begin{dcases}
        u_i+a_i\sin(\omega_i t)+\delta_i \sum_{j\in\mathcal{D}_i}a_j\sin\left(\omega_{j}t\right) & \text{if } i \in \mathcal{D} \\
        u_i+a_i\sin(\omega_i t) & \text{else}
    \end{dcases}\label{decgame1}\\
    \dot{u}_i&=-\frac{2k_i}{a_i}J_i(x)\sin(\omega_i t),\label{decgame2}
 \end{align}
\end{subequations}
Now that the overall game dynamics have been established, we will present our main results.
\section{Main Results}
\subsection{Stability}
Let $\mathcal{K}_j$ represent the set of players who are deceptive to player $j$, i.e $\mathcal{K}_j=\{i\in\mathcal{D}: j\in\mathcal{D}_i\}$. Moreover, let $\delta=[\delta_{z_1},...,\delta_{z_n}]^\top$. We define $\overline{\mathcal{Q}}(\delta)\in\mathbb{R}^{N\times N}, \overline{\cb}(\delta)\in\mathbb{R}^N$ with entries given by 
\begin{subequations}
    \begin{align}    [\overline{\cq}(\delta)]_{i:}&=[\cq]_{i:}+\sum_{k\in \mathcal{K}_i}\delta_k [Q_i]_{k:}\\
    [\overline{\cb}(\delta)]_i&=[\cb]_i+\sum_{k\in \mathcal{K}_i}\delta_k \frac{m_i R_{\|}}{R_i R_k}.
    \end{align}
\end{subequations}
These quantities essentially represent the ``perturbed" pseudogradient that results from deception, which will become more intuitive when we obtain the averaged system in the stability proof. Before we proceed to the stability analysis, we define the \emph{stability-preserving set}:
\begin{equation}
    \Delta=\{\delta\in\mathbb{R}^n : -K\overline{\cq}(\delta) \text{ is Hurwitz}\}
\end{equation}
where $K=\text{diag}([k_1,...,k_N]^\top)$. We can then present our first result, which provides a somewhat useful ``lower bound" on $\Delta$:
\begin{lemma}\label{deltaapprox}
    If $|\delta|<1$, then $\delta\in\Delta$.  
\end{lemma}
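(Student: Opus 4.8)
The plan is to show that the hypothesis $|\delta|<1$ forces the perturbed matrix $\overline{\cq}(\delta)$ to be strictly row-diagonally dominant with positive diagonal entries, and then to invoke the Gershgorin circle theorem (which is stable under left-multiplication by a positive diagonal matrix) to place the spectrum of $K\overline{\cq}(\delta)$ in the open right half-plane. The first thing I would do is read off from \eqref{qi} that the deception correction only acts on the diagonal of $\cq$: for any $k\neq i$, row $k$ of $Q_i$ has a \emph{single} nonzero entry, $[Q_i]_{ki}=-R_{\|}/(R_iR_k)$, sitting in column $i$. Since every $k\in\ck_i$ obeys $k\neq i$ (because $i\in\mathcal{D}_k\subseteq[N]\setminus\{k\}$), the correction $\sum_{k\in\ck_i}\delta_k[Q_i]_{k:}$ that is added to row $i$ of $\cq$ contributes only to the $(i,i)$ entry. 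Hence $[\overline{\cq}(\delta)]_{ik}=[\cq]_{ik}$ for all $k\neq i$, while $[\overline{\cq}(\delta)]_{ii}=[\cq]_{ii}-\sum_{k\in\ck_i}\delta_k R_{\|}/(R_iR_k)$.

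Next I would record the ``doubly diagonally dominant'' structure of $\cq$ itself. Using $1/\overline{R}_i=\sum_{k\neq i}1/R_k$, the off-diagonal row sum is $\rho_i:=\sum_{k\neq i}|[\cq]_{ik}|=\sum_{k\neq i}R_{\|}/(R_iR_k)=R_{\|}/(R_i\overline{R}_i)>0$, whereas $[\cq]_{ii}=2R_{\|}/(R_i\overline{R}_i)=2\rho_i$. Then I bound the perturbation: since each component satisfies $|\delta_k|\le|\delta|<1$ and $\ck_i\subseteq[N]\setminus\{i\}$,
$$\Big|\sum_{k\in\ck_i}\delta_k\frac{R_{\|}}{R_iR_k}\Big|\le\sum_{k\in\ck_i}|\delta_k|\frac{R_{\|}}{R_iR_k}<\sum_{k\in\ck_i}\frac{R_{\|}}{R_iR_k}\le\sum_{k\neq i}\frac{R_{\|}}{R_iR_k}=\rho_i.$$
Combining with the previous display gives $[\overline{\cq}(\delta)]_{ii}>2\rho_i-\rho_i=\rho_i=\sum_{k\neq i}|[\overline{\cq}(\delta)]_{ik}|>0$ for every $i$, i.e. $\overline{\cq}(\delta)$ is strictly row-diagonally dominant with strictly positive diagonal.

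Finally I would close the argument. Left-multiplication by $K=\mathrm{diag}(k_1,\dots,k_N)$ with $k_i>0$ merely scales row $i$ by $k_i$, so $K\overline{\cq}(\delta)$ is again strictly row-diagonally dominant with positive diagonal. By the Gershgorin circle theorem, every eigenvalue of $K\overline{\cq}(\delta)$ lies in a disc $\{z:|z-k_i[\overline{\cq}(\delta)]_{ii}|\le k_i\sum_{k\neq i}|[\overline{\cq}(\delta)]_{ik}|\}$, and on such a disc $\mathrm{Re}(z)\ge k_i([\overline{\cq}(\delta)]_{ii}-\sum_{k\neq i}|[\overline{\cq}(\delta)]_{ik}|)>0$; hence $-K\overline{\cq}(\delta)$ is Hurwitz, i.e. $\delta\in\Delta$. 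The only step requiring genuine care is the first one — correctly extracting from \eqref{qi} that the $\delta_k$-corrections hit only the diagonal and that $k\neq i$ always holds — because once that sparsity observation and the factor-of-two dominance of $\cq$ are in place, the remainder is a routine Gershgorin estimate with no real obstacle.
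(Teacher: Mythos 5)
Your proof is correct and follows essentially the same route as the paper's: observe that the deceptive correction touches only the diagonal entry of row $i$, use the factor-of-two dominance $[\cq]_{ii}=2R_{\|}/(R_i\overline{R}_i)$ to absorb a perturbation of size at most $|\delta|\,R_{\|}/(R_i\overline{R}_i)<R_{\|}/(R_i\overline{R}_i)$, and conclude by the Gershgorin circle theorem. Your write-up is in fact slightly more complete than the paper's, since you explicitly note that left-multiplication by the positive diagonal matrix $K$ preserves strict row diagonal dominance, a step the paper leaves implicit.
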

\begin{proof}
    By \eqref{qi} we already know $[Q_i]_{kj}=0$ for $k\in\ck_i$ and $j\neq i$, thus for fixed $i$ we have
    \begin{align}
        \sum_{j\neq i}^N|[\overline{\cq}(\delta)]_{ij}|=\sum_{j\neq i}^N\frac{R_{\|}}{R_i R_j}=\frac{R_{\|}}{R_i \overline{R}_i}.
    \end{align}
    and 
    \begin{align}
        [\overline{\cq}(\delta)]_{ii}=[\cq]_{ii}+\sum_{k\in \mathcal{K}_i}\delta_k [Q_i]_{ki}\ge [\cq]_{ii}-|\delta|\frac{R_{\|}}{R_i \overline{R}_i}>\frac{R_{\|}}{R_i \overline{R}_i}.
    \end{align}
    The result then follows by the Gershgorin Circle Theorem. \qedwhite
\end{proof}
This estimate is a significant improvement compared to the results of \cite{tang2024deception}, which only guarantee that $\Delta$ contains a neighborhood of the origin. By exploiting the structure of the duopoly, we are now able to say that this neighborhood at least contains the unit ball. To characterize when the deceptive players are able to properly achieve their desired payoffs via deception, we introduce the notion of \emph{attainability} as was presented in \cite{tang2024deception}:
\begin{definition}\label{jattaindef}
    A vector $J^{\text{ref}}=[J_{z_1}^{\text{ref}},...,J_{z_n}^{\text{ref}}]^\top$ is said to be attainable if there exists $\delta^*\in\Delta$ such that:
    \begin{enumerate}
        \item $J_{z_k}(-\overline{\cq}(\delta^*)^{-1}\overline{\cb}(\delta^*))=J_{z_k}^{\text{ref}},\quad\forall~~k\in [n]$.
        \item The matrix $\Lambda(\delta^*)\in\mathbb{R}^{n\times n}$ with $[\Lambda(\delta^*)]_{jk}=\nabla_j \xi_k(\delta^*)$ is Hurwitz, where $\xi_k:\mathbb{R}^{n}\to\mathbb{R}$ is given by $\xi_k(\delta):=\varepsilon_{z_k} J_{z_k}(-\overline{\cq}(\delta)^{-1}\overline{\cb}(\delta))$.
    \end{enumerate}
 We let $\Omega\subset\mathbb{R}^n$ denote the set of all \emph{attainable} vectors $J^{\text{ref}}=[J_{z_1}^{\text{ref}},...,J_{z_n}^{\text{ref}}]^\top$.
\end{definition}
With this definition at hand, we can now state the main result of this paper. The following theorem characterizes the stability properties of the NES dynamics with deception:
%

\begin{theorem}
Consider the NES seeking dynamics with $J^\text{ref}\in\Omega$. Then there exists $\varepsilon^*>0$ such that for all $\varepsilon\in (0, \varepsilon^*)$, there exists $a^*>0$ such that for $a_1,...,a_N\in (0, a^*)$ there exists $\omega^*>0$ such that for all $\omega>\omega^*$ the state $\zeta(t):=[u(t)\quad\delta(t)]^\top$ converges exponentially to a $\mathcal{O}(\frac{1}{\omega}+\max_i a_i)$-neighborhood of a point $\zeta^*:=[u^*\quad\delta^*]^\top$, provided $|\zeta(0)-\zeta^*|$ is sufficiently small.
\end{theorem}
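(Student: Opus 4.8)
The plan is to use the standard two-time-scale averaging-plus-singular-perturbation machinery for extremum-seeking systems, adapted to the three-time-scale structure here ($\omega$ fast oscillations, the $u_i$ dynamics at the intermediate rate $k_i$, and the $\delta$ dynamics at the slow rate $\varepsilon$). First I would change to the error coordinates $\tilde u_i = u_i - u_i^*$ and $\tilde\delta_k = \delta_k - \delta_k^*$, where $\delta^*\in\Delta$ is the attainability point from Definition~\ref{jattaindef} and $u^*$ is chosen so that $u^* = -\overline{\cq}(\delta^*)^{-1}\overline{\cb}(\delta^*)$ shifted appropriately by the averaged sinusoidal contributions. Substituting $x_i$ from \eqref{decgame1} into \eqref{decgame2} and into \eqref{deltadyn}, the right-hand sides become explicit trigonometric polynomials in $t$ (products of the various $\sin(\omega_i t)$, using that each $\omega_i = \omega\overline{\omega}_i$ with $\overline{\omega}_i$ rational, so Assumption~\ref{assumpw} guarantees the system is $T$-periodic in $t$ with $T = 2\pi\lcm/\omega$ for a suitable common period). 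This legitimizes periodic averaging.

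Next I would compute the averaged dynamics. Averaging \eqref{decgame2} over one period: the term $J_i(x)\sin(\omega_i t)$ picks out exactly the components of $\mathcal{G}$ evaluated at the DC part of $x$, plus — crucially — the cross-terms $\delta_i a_j \sin(\omega_j t)\sin(\omega_i t)$ and the quadratic structure of $J_i$ produce the extra contributions $\sum_{k\in\mathcal K_i}\delta_k [Q_i]_{k:}$ and $\sum_{k\in\mathcal K_i}\delta_k m_i R_\|/(R_iR_k)$. This is precisely why $\overline{\cq}(\delta)$ and $\overline{\cb}(\delta)$ were defined as they were: the averaged $u$-subsystem is $\dot{\bar u} = -K\big(\overline{\cq}(\delta)\bar u + \overline{\cb}(\delta)\big) + \mathcal O(\max_i a_i)$, and the averaged $\delta$-subsystem is $\dot{\bar\delta}_k = \varepsilon\,\xi_k(\delta)\big|_{\text{with }\bar u\approx -\overline{\cq}(\delta)^{-1}\overline{\cb}(\delta)} + \mathcal O(\cdot)$. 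I would verify the $\mathcal O(\max_i a_i)$ bound on the residual terms carefully, since those come from the genuinely quadratic $\tfrac12 x^\top Q_i x$ contributions involving $a_i^2$ self-interaction terms.

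Then I would apply singular perturbation theory to the averaged system, treating $\varepsilon$ as the singular parameter. The boundary-layer (fast) system is the $\bar u$ dynamics with $\delta$ frozen; its Jacobian is $-K\overline{\cq}(\delta)$, which is Hurwitz for all $\delta\in\Delta$ by definition of the stability-preserving set — and by Lemma~\ref{deltaapprox} this includes a full unit ball, so a neighborhood of $\delta^*$ lies in $\Delta$, giving uniform exponential stability of the quasi-steady-state manifold $\bar u = h(\delta) := -\overline{\cq}(\delta)^{-1}\overline{\cb}(\delta)$ (modulo the $\mathcal O(\max_i a_i)$ perturbation, handled by a standard implicit-function/continuity argument that perturbs the manifold by $\mathcal O(\max_i a_i)$). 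The reduced (slow) system on the manifold is $\dot{\bar\delta} = \varepsilon\,[\xi_1(\bar\delta),\dots,\xi_n(\bar\delta)]^\top$, whose linearization at $\delta^*$ is $\varepsilon\Lambda(\delta^*)$, Hurwitz by condition~2 of attainability; hence $\delta^*$ is locally exponentially stable for the reduced dynamics. Invoking a singular perturbation theorem (e.g. Khalil, local exponential stability version) yields exponential convergence of the full averaged $(\bar u,\bar\delta)$ system to an $\mathcal O(\max_i a_i)$-neighborhood of $(h(\delta^*),\delta^*)$ for all $\varepsilon\in(0,\varepsilon^*)$. Finally, the averaging theorem (for exponentially stable averaged systems, e.g. Khalil Thm~10.4 or its practical-stability refinement) transfers this back to the original time-varying system: there is $\omega^*$ such that for $\omega>\omega^*$ the true trajectory $\zeta(t)$ converges exponentially to an additional $\mathcal O(1/\omega)$-neighborhood, giving the claimed $\mathcal O(1/\omega + \max_i a_i)$ bound, with the basin-of-attraction caveat inherited from the local nature of the singular perturbation result.

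The main obstacle I anticipate is the ordering and interaction of the three limits: the theorem demands $\varepsilon^*$ first (independent of $a$ and $\omega$), then $a^*$ depending on $\varepsilon$, then $\omega^*$ depending on both. One must be careful that the singular-perturbation estimate on the averaged system is uniform in $a$ small (so that choosing $a^*$ after $\varepsilon^*$ is legitimate), and that the averaging error constants, which generically blow up as the averaged system's region of attraction shrinks, stay controlled — this is where the improved lower bound $\Delta \supseteq \{|\delta|<1\}$ from Lemma~\ref{deltaapprox} does real work, since it gives a uniform (not merely infinitesimal) Hurwitz margin for $-K\overline{\cq}(\delta)$ on a neighborhood of $\delta^*$, keeping the boundary-layer decay rate bounded away from zero independently of how $\varepsilon, a, 1/\omega$ are chosen. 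A secondary subtlety is confirming the exact form of the averaged cross-terms — that the deception signal $\delta_i\sum_{k\in\mathcal D_i}a_k\sin(\omega_k t)$ interacts with $\sin(\omega_i t)$ in player $i$'s own update to produce nothing (since $\omega_k\neq\omega_i$), but interacts with the \emph{victims'} updates to produce exactly the $[Q_j]_{i:}$ rows — so that $\mathcal K_j$, not $\mathcal D_i$, is the correct index set in $\overline{\cq}$; this bookkeeping must be done explicitly.
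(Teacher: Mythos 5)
Your proposal follows essentially the same route as the paper: average the oscillations over the common period guaranteed by Assumption~\ref{assumpw} to obtain the $\overline{\cq}(\delta),\overline{\cb}(\delta)$ structure with an $\mathcal{O}(\max_i a_i)$ residual from the quadratic terms, treat the averaged system as singularly perturbed in $\varepsilon$ with boundary layer $-K\overline{\cq}(\delta)y$ (Hurwitz on $\Delta$) and reduced dynamics whose linearization is $\Lambda(\delta^*)$ (Hurwitz by attainability), absorb the $\mathcal{O}(a)$ term by robustness, and finish with the averaging theorem for large $\omega$. The argument, the key index bookkeeping ($\mathcal{K}_j$ rather than $\mathcal{D}_i$), and the ordering of the three parameters all match the paper's proof.
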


\begin{proof}
    To analyze the system, let ${\mu}(t)=x-u$, where $x, u$ are given in \eqref{decgamedyn}. In other words, $\mu(t)$ is the vector of sinusoids. We apply the time scale transformation $\tau=\omega t$, and denote $\tilde{\mu}(\tau)=\mu(\tau/\omega)$ and $T=2\pi \times \lcm\{{1}/{\overline{\omega}_1}, {1}/{\overline{\omega}_2}, ..., {1}/{\overline{\omega}_N}\}$. With standard averaging theory \cite{khalil}, we can compute the average dynamics of system \eqref{decgamedyn}, whose state we denote as $\tilde{u}\in\mathbb{R}^N$:
    \begin{align}
    \frac{\partial{\tilde{u}}_i}{\partial \tau}&=\frac{1}{\omega T}\int_{0}^{T}-\frac{2k_i}{a_i}J_i(\tilde{u}+\tilde{\mu}(\tau))\sin(\overline{\omega}_i \tau)d\tau\notag\\
    &=-\frac{2k_i}{a_i\omega T}\int_{0}^{T}\biggl(J_i(\tilde{u})+\tilde{\mu}(\tau)^\top\nabla J_i(\tilde{u})+\frac12\tilde{\mu}(\tau)^\top Q_i \tilde{\mu}(\tau)\biggl)\sin(\overline{\omega}_i \tau)d\tau\\
    &=-\frac{2k_i}{a_i\omega T}\int_{0}^{T}\sin(\overline{\omega}_i \tau)\tilde{\mu}(\tau)^\top (Q_i \tilde{u}+b_i)d\tau\notag\\
    &=-\frac{k_i}{\omega}\left([Q_i]_{i:}\tilde{u}+[b_i]_i+\sum_{k\in\mathcal{K}_i}\tilde{\delta}_k ([Q_i]_{k:}\tilde{u}+[b_i]_k)\right).\label{decavgsystem}
\end{align}
By combining all $u_i$'s, we obtain the average system
\begin{equation}
    \frac{\partial \tilde{u}}{\partial\tau}=\frac{1}{\omega}\left(-K\overline{\cq}(\tilde{\delta})\tilde{u}-K\overline{\cb}(\tilde{\delta})\right).
\end{equation}
We can apply the same technique on \eqref{decgamedyn} for $i\in\mathcal{D}$:
\begin{align}
        \dfrac{\partial \tilde{\delta}_i}{\partial \tau}&=\dfrac{\varepsilon}{\omega}\frac{1}{T}\int_{0}^{T}\varepsilon_i\biggl( J_i(\tilde{u})-J_i^{\text{ref}}+\tilde{\mu}(\tau)^\top\nabla J_i(\tilde{u})+\frac12\tilde{\mu}(\tau)^\top Q_i\tilde{\mu}(\tau)\biggl)d\tau\label{thm1deltat}\\
        &=\dfrac{\varepsilon}{\omega}\varepsilon_i \left(J_i(\tilde{u})-J_i^{\text{ref}}+\mathcal{P}_i([a_1,...,a_N]^\top)\right),\quad i\in \mathcal{D},\label{thm1delta}
\end{align}
where $P_i:\mathbb{R}^N\to\mathbb{R}$ is a quadratic function satisfying $\mathcal{P}_i(a)=0$, and can thus be treated as an $\mathcal{O}(a)$ perturbation on compact sets, where $a=[a_1,...,a_N]^\top$. We recall that $\mathcal{D}=\{z_1,...,z_n\}$. By setting $\tau^*=\varepsilon \tau$ and ${J}^*(\tilde{u})=[J_{z_1}(\tilde{u}),...,J_{z_n}(\tilde{u})]^\top$, the entire system can be represented as
\begin{equation}\label{thm1spf}
        \begin{bmatrix}
            \varepsilon\dfrac{\partial \tilde{u}}{\partial \tau^*}\\
            \dfrac{\partial \tilde{\delta}}{\partial \tau^*}
        \end{bmatrix}=\dfrac{1}{\omega}\begin{bmatrix}
            -K\overline{\cq}(\tilde{\delta})\tilde{u}-K\overline{\cb}(\tilde{\delta})\\
            \text{diag}(\varepsilon_{z_1},...,\varepsilon_{z_n})\left({J}^*(\tilde{u})-J^{\text{ref}}\right)
        \end{bmatrix}+\mathcal{O}(a).
    \end{equation}
    If we disregard the perturbation in \eqref{thm1spf}, the resulting system is a singularly perturbed system with quasi steady state $h(\tilde{\delta})=-\overline{\cq}(\tilde{\delta})^{-1}\overline{\cb}(\tilde{\delta})$ and reduced dynamics given by
     \begin{equation}
        \dfrac{\partial \tilde{\delta}}{\partial \tau^*}=\dfrac{1}{\omega}\text{diag}(\varepsilon_{z_1},...,\varepsilon_{z_n})\left({J}^*(-\overline{\cq}(\tilde{\delta})^{-1}\overline{\cb}(\tilde{\delta}))-J^{\text{ref}}\right).\label{thm1red}
    \end{equation}
    Since $J^\text{ref}\in\Omega$, it follows that \eqref{thm1red} has an exponentially stable equilibrium point $\delta^*\in\Delta$. Moreover, by denoting $y=\tilde{u}-h(\tilde{\delta})$, we obtain the boundary layer system
    \begin{equation}
        \frac{\partial y}{\partial \tau}=-K\overline{\cq}(\tilde{\delta})y
    \end{equation}
    where the origin is exponentially stable uniformly in any compact set contained in $\Delta$. Hence the unperturbed system \eqref{thm1spf} has an exponentially stable equilibrium point $\zeta^*=[u^*\quad\delta^*]^\top$ where $u^*=-\overline{\cq}(\delta^*)^{-1}\overline{\cb}(\delta^*)$. By standard robustness results for systems with small additive perturbations, we can find $a^*>0$ such that for $\max_i a_i\in(0, a^*)$, $\tilde{\zeta}$ converges exponentially to a $\mathcal{O}(\max_i a_i)$ neighborhood of $\zeta^*$ provided $|\zeta(0)-\zeta^*|$ sufficiently small. Then, by standard averaging results for ODEs \cite[Thm 10.4]{khalil} we prove the claim for $\omega$ sufficiently large.\qedwhite
\end{proof}
Even though this proof is similar to that of the main result in \cite{tang2024deception}, we include it here to account for the generalization of allowing distinct $k_i$ and $a_i$. While this result is quite useful, one main concern is computing the set $\Omega$. A relatively straightforward method for computing $\Omega$ when $|\mathcal{D}|=|\mathcal{D}_{z_1}|=1$ is presented in \cite{tang2024deception}, but in general it is a challenging problem. One promising approach is to use Lemma \ref{deltaapprox} to evaluate a ``lower bound" for $J_i(-\cq(\delta)^{-1}\cb(\delta))$ and then use numerical methods to approximate $\Lambda(\delta)$ for $|\delta|<1$.
\subsection{Intuition Behind Deception}
In \cite{tang2024deception}, it was shown that deception via \eqref{decgamedyn} essentially transforms the duopoly into a deceptive game with deceptive costs that manipulate the victim's seeking dynamics into misinterpreting their sales function $s_i(x)$, so it is of interest to generalize this intuition to the $N$-player oligopoly. We first recall the definition of a \emph{deceptive game} from \cite{tang2024deception}:
\begin{definition}\label{defdeceptivegames}
Given a non-cooperative game $\{J_i\}_{i\in[N]}$, we say that $\{\tilde{J}_i\}_{i\in[N]}$ is a deceptive game if there exists a nonempty subset  $\mathcal{D}\subset[N]$ such that for each $i\in \mathcal{D}$ there exists a function $\sigma_i:\mathbb{R}^{N-1}\to\mathbb{R}$, a nonempty set $\mathcal{K}_i\subset[N]\setminus\{i\}$ and scalars $\delta_{k}\neq 0~~\forall k\in\mathcal{K}_i$, such that $\tilde{J}_i$ satisfies:
%
\begin{equation}\label{deceptivecost}
    \tilde{J}_i(x)=J_i(x)+\sigma_i(x_{-i})+\sum_{k\in\mathcal{K}_i}\delta_{k}\int_{0}^{x_i}\nabla_{k}  J_{i} (y, x_{-i})dy,
\end{equation}
for all $x\in\mathbb{R}^N$. If $i\not\in \mathcal{D}$, then $\tilde{J}_i(x)=J_i(x)$. 
\end{definition}
In other words, this definition captures the effect of deception by framing the emerging behavior of the system as a standard Nash equilibrium-seeking problem parameterized by costs $\{\tilde{J}_i\}_{i\in[N]}$. That is, the game with costs $\{{J}_i\}_{i\in[N]}$ under the deceptive NES dynamics \eqref{decgamedyn} will exhibit the same asymptotic behavior as a standard non-cooperative game with costs $\{\tilde{J}_i\}_{i\in[N]}$ and players implementing the deception-free NES dynamics \eqref{extsc0}. We can then use \eqref{deceptivecost} to compute a deceptive game for the oligopoly market scenario:
\begin{align}
    \tilde{J}_i(x)&=J_i(x)+\sigma_i(x_{-i})-\sum_{k\in\ck_i}\delta_k\int_{0}^{x_i} \frac{R_{\|}}{R_i R_k}(y-m_i) dy\\
    &=-s_i(x)(x_i-m_i)+\sigma_i(x_{-i})-\sum_{k\in\ck_i}\frac{\delta_k R_{\|}x_i}{2R_i R_k}\left(x_i-2m_i\right).
\end{align}
If we set
\begin{equation}
    \sigma_i(x_{-i}):=-\frac{R_{\|}m_i^2}{2R_i}\sum_{k\in\ck_i}\frac{\delta_k}{R_k}
\end{equation}
we finally obtain
\begin{align}
    \tilde{J}_i(x)&=-\left(s_i(x)+\frac{R_{\|}x_i}{2R_i}\sum_{k\in\ck_i}\frac{\delta_k}{R_k}\right)(x_i-m_i)+(x_i-m_i)\frac{R_{\|}m_i}{2R_i}\sum_{k\in\ck_i}\frac{\delta_k}{R_k}\\
    &=-\left(s_i(x)+(x_i-m_i)\frac{R_{\|}}{2R_i}\sum_{k\in\ck_i}\frac{\delta_k}{R_k}\right)(x_i-m_i).\label{inflated}
\end{align}
When comparing \eqref{inflated} with \eqref{jcost}, we can observe that player $i$ now has a $\delta$-inflated sales function, which is highly reminiscent of the duopoly analysis from \cite{tang2024deception}. If we reasonably assume that $x_i>m_i$, then $\frac{R_{\|}}{2R_i}\sum_{k\in\ck_i}\frac{\delta_k}{R_k}>0$ implies that player $i$'s NES dynamics behave as if their sales are greater than they really are. This will result in player $i$ increasing their price $x_i$ to increase their payoff, even though in reality it might harm their profits. Similarly, if $\frac{R_{\|}}{2R_i}\sum_{k\in\ck_i}\frac{\delta_k}{R_k}<0$, player $i$'s strategy will behave as if their sales are lower than their true value. This will result in player $i$ decreasing their price $x_i$, which could potentially harm their profits. 
\begin{remark}
Another useful observation is the term $\frac{\delta_k}{R_k}$, which indicates that a firm with a more desirable product will wield more influence as a deceiver.
\end{remark}
Although the ``deceptive game" view adds some intuition behind how deception affects the players' behavior, one of the main drawbacks is that the variability of $\sigma_i$ may lead to conflicting interpretations. Hence, we present an alternate and somewhat consistent viewpoint that captures the effect of deception on player $i$'s estimate of their gradient $\nabla_i J_i$.  From \eqref{deceptivecost} we have:
\begin{align*}
    \nabla_i \tilde{J}_i(x)&=\nabla_i J_i(x)+\sum_{k\in\ck_i}\delta_k \nabla_k J_i(x)\\
    &=\left([Q_i]_{i:}+\sum_{k\in\ck_i}\delta_k [Q_i]_{k:}\right)x+\left([b_i]_{i}+\sum_{k\in\ck_i}\delta_k [b_i]_{k}\right)\\
    &=[\overline{\cq}(\delta)]_{i:}x+[\overline{\cb}(\delta)]_i.
\end{align*}
However, note that $[\overline{\cq}(\delta)]_{ij}=[Q_i]_{ij}$ for $i\neq j$, and we also have
\begin{align}
    [\overline{\cq}(\delta)]_{ii}&=[Q_i]_{ii}+\sum_{k\in\ck_i}\delta_k [Q_i]_{ki}\notag\\
    &=\frac{2R_{\|}}{R_i\overline{R}_i}-\sum_{k\in\ck_i}\delta_k\frac{R_{\|}}{R_i R_k}\notag\\
    &=\frac{R_{\|}}{R_i}\left(\frac{2}{\overline{R}_i}-\sum_{k\in\ck_i}\frac{\delta_k}{R_k}\right)\label{intq}
\end{align}
and 
\begin{align}    \left([b_i]_{i}+\sum_{k\in\ck_i}\delta_k [b_i]_{k}\right)&=-\frac{m_i R_{\|}}{R_i\overline{R}_i}-\frac{S_d R_{\|}}{R_i}+\sum_{k\in\ck_i}\delta_k\frac{m_i R_{\|}}{R_i R_k}\notag\\
&=-\frac{m_i R_{\|}}{R_i}\left(\frac{1}{\overline{R}_i}-\sum_{k\in\ck_i}\frac{\delta_k}{R_k}\right)-\frac{S_d R_{\|}}{R_i}\label{intb}
\end{align}
Both cases seem to indicate that when player $i$ is being attacked, their NES dynamics ``learn" that the total desirability of the other players' products is given by $\frac{1}{\overline{R}_i}-\sum_{k\in\ck_i}\frac{\delta_k}{R_k}$ instead of $\frac{1}{\overline{R}_i}$.

This means that whenever $\sum_{k\in\ck_i}\frac{\delta_k}{R_k}>0$ holds, player $i$ believes the other player's products to be less desirable than they really are, which would lead to player $i$ increasing their price $x_i$. Similarly, when 
$\sum_{k\in\ck_i}\frac{\delta_k}{R_k}<0$, player $i$'s NES dynamics will overestimate the desirability of the other players' products, leading to player $i$ lowering their price. Alternatively, one can also make the following observation:
\begin{equation}
    \frac{1}{\overline{R}_i}-\sum_{k\in\ck_i}\frac{\delta_k}{R_k}=\sum_{j\not\in\ck_i\cup\{i\}}\frac{1}{R_j}+\sum_{k\in\ck_i}\frac{1-\delta_k}{R_k}\label{dec_rk}
\end{equation}
which suggests something slightly more intricate. In particular, although \eqref{intq} and \eqref{intb} provide some insight into how the players in $\ck_i$ collectively alter the beliefs of player $i$, equation \eqref{dec_rk} indicates the degree to which each deceptive player affects player $i$'s estimate of the desirability of that deceptive player's product. Just as we have seen before, the term $\frac{1}{R_k}$ implies that the desirability of a firm's product can amplify their ability to deceive.

So far, we have observed that when firms in an oligopoly implement the deceptive NES strategy \eqref{decgamedyn}, the asymptotic behavior aligns with that of a game parameterized by costs $\{\tilde{J}_i\}_{i\in[N]}$ with firms implementing the deception-free NES dynamics \eqref{extsc0}, but it is also of interest to ask if the new ``deceptive" equilibrium point is actually a Nash equilibrium of the deceptive game $\{\tilde{J}_i\}_{i\in[N]}$. The following result provides some sufficient conditions to answer this question. We present the proof for completeness:
\begin{theorem}
    Consider the $N-$player oligopoly with costs $J_i$ of the form \eqref{jcost}. Let $\delta\in \{\delta\in\mathbb{R}^n: 0<|\delta|<2 \emph{ and } \overline{\cq}(\delta) \emph{ is invertible} \}$ and let $\{\tilde{J}_i\}_{i\in[N]}$ be a corresponding deceptive game. Then, the point $u^*=-\overline{\cq}(\delta)^{-1}\overline{\cb}(\delta)$ is a Nash equilibrium of the deceptive game $\{\tilde{J}_i\}_{i\in[N]}$.
\end{theorem}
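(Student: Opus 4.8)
The plan is to show that $u^*$ is, for every player, simultaneously the unique stationary point of its best-response problem in the deceptive game and that this problem is strictly convex in the player's own action; strict convexity then upgrades ``stationary point'' to ``global minimizer,'' which is exactly the Nash condition \eqref{eqNE} for $\{\tilde{J}_i\}_{i\in[N]}$.

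First I would record the pseudogradient of the deceptive game. Differentiating \eqref{deceptivecost} with respect to $x_i$ — exactly the computation carried out in the paragraph preceding the theorem — gives $\nabla_i\tilde{J}_i(x)=[\overline{\cq}(\delta)]_{i:}x+[\overline{\cb}(\delta)]_i$ for every $i\in[N]$ (for indices $i$ with $\ck_i=\emptyset$ this collapses to $\nabla_i J_i$ and $\tilde{J}_i=J_i$). Since $\overline{\cq}(\delta)$ is invertible by hypothesis, the affine map $x\mapsto\overline{\cq}(\delta)x+\overline{\cb}(\delta)$ has the single zero $u^*=-\overline{\cq}(\delta)^{-1}\overline{\cb}(\delta)$; in particular $\nabla_i\tilde{J}_i(u^*)=0$ for all $i$, so $u_i^*$ is a stationary point of $x_i\mapsto\tilde{J}_i(x_i,u_{-i}^*)$.

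Next I would establish strict convexity in the own action. Each $\tilde{J}_i$ is a quadratic polynomial (a quadratic perturbation of the quadratic $J_i$), and from the expression for $\nabla_i\tilde{J}_i$ the second own-derivative is the constant $\nabla_i^2\tilde{J}_i=[\overline{\cq}(\delta)]_{ii}$. Using identity \eqref{intq}, $[\overline{\cq}(\delta)]_{ii}=\frac{R_{\|}}{R_i}\bigl(\frac{2}{\overline{R}_i}-\sum_{k\in\ck_i}\frac{\delta_k}{R_k}\bigr)$. Because $\ck_i\subseteq[N]\setminus\{i\}$ we have $\sum_{k\in\ck_i}\frac{1}{R_k}\le\sum_{k\neq i}\frac{1}{R_k}=\frac{1}{\overline{R}_i}$, and since $|\delta_k|\le|\delta|$ for each component, $\sum_{k\in\ck_i}\frac{\delta_k}{R_k}\le|\delta|\sum_{k\in\ck_i}\frac{1}{R_k}\le\frac{|\delta|}{\overline{R}_i}<\frac{2}{\overline{R}_i}$ whenever $|\delta|<2$. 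Hence $[\overline{\cq}(\delta)]_{ii}>0$ for all $i$, so $x_i\mapsto\tilde{J}_i(x_i,u_{-i}^*)$ is a strictly convex parabola.

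Finally, a strictly convex one-dimensional quadratic has a unique global minimizer, which is precisely its unique stationary point; combining this with the first step yields $u_i^*=\arg\min_{x_i}\tilde{J}_i(x_i,u_{-i}^*)$ for every $i\in[N]$, i.e.\ condition \eqref{eqNE} for the game $\{\tilde{J}_i\}_{i\in[N]}$, so $u^*$ is a Nash equilibrium of the deceptive game. The only step with any real content is the diagonal-positivity estimate — this is exactly where the threshold $|\delta|<2$ is consumed, and it is worth checking that $|\delta_k|\le|\delta|$ holds for whatever norm $|\cdot|$ is intended in the statement (it does for every $\ell^p$ norm, $p\ge 1$). Everything else is bookkeeping with the quadratic structure already derived in the surrounding text.
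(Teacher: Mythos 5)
Your proof is correct and follows essentially the same route as the paper: the first-order condition comes from invertibility of $\overline{\cq}(\delta)$ forcing the affine pseudogradient to vanish at $u^*$, and the second-order condition comes from the diagonal bound $[\overline{\cq}(\delta)]_{ii}\ge[\cq]_{ii}-|\delta|\frac{R_{\|}}{R_i\overline{R}_i}>0$ for $|\delta|<2$. The paper's proof is just a terser version of the same argument (it cites the first- and second-order conditions from the game-theory literature rather than spelling out the one-dimensional strict-convexity step).
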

\begin{proof}
Since the pseudogradient of the deceptive game satisfies $\mathcal{G}(x)=\overline{\cq}(\delta)x+\overline{\cb}(\delta)$, it is easy to verify that if $\overline{\cq}(\delta)$ is invertible, the first order condition \cite{bacsar1998dynamic} for $u^*$ to be a Nash equilibrium is satisfied. We also have
\begin{align}
        [\overline{\cq}(\delta)]_{ii}=[\cq]_{ii}+\sum_{k\in \mathcal{K}_i}\delta_k [Q_i]_{ki}\ge [\cq]_{ii}-|\delta|\frac{R_{\|}}{R_i \overline{R}_i}>0,
\end{align}
since $|\delta|<2$. Thus, the second order condition \cite{bacsar1998dynamic} is satisfied, which implies $u^*$ is a Nash equilibrium of the deceptive game $\{\tilde{J}_i\}_{i\in[N]}$.\qedwhite
\end{proof}
\section{Numerical Results}
 \begin{figure}[t!]
  \centering
    \includegraphics[width=0.7\textwidth]{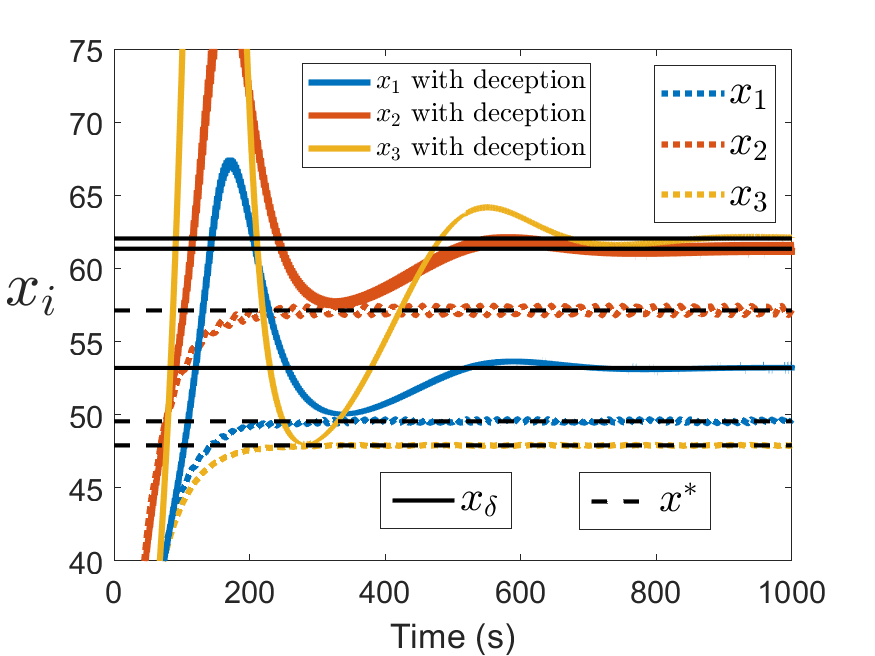}
    \caption{\small The price convergence for a three player oligopoly, where we use $x^*$ and $x_\delta$ to denote the true Nash equilibrium and the deceptive Nash equilibrium, respectively.}\label{figex}
    \vspace{-0.5cm}
\end{figure}
To further illustrate our theoretical results, we present a three-firm oligopoly example with parameters $R_1=0.67,~R_2=0.36,~R_3=0.8, m_1=20,~m_2=29,~m_3=30$ and $S_d=100$. It is simple to check that this game has Nash equilibrium $x_0=[49.55, 57.13, 47.9]^\top$ and steady state costs $J_1(x_0)=-950.7, J_2(x_0)=-1092$, and $J_3(x_0)=-239.2$. For the NES dynamics, the players use $a_1=0.04, k_1=0.02, \omega_1=6346,
a_2=0.03,
k_2=0.019, \omega_2=4089, a_3=0.05$, $k_3=0.22$, and $\omega_3=6115$. Now we let player 1 be deceptive to player 3, where player 1 tunes $\delta_1$ according to \eqref{deltadyn} with $\varepsilon=10^{-4}$, $\varepsilon_1=1$ and $J_1^\text{ref}=-1200$ (recall that for our algorithm, we treat $J_i$ as a \emph{cost} to be \emph{minimized}). Moreover, we have
\begin{subequations}
    \begin{align}
        \overline{\cq}(\delta)&=\begin{bmatrix}
            2.18 & -0.75 & -0.34\\
            -0.75 & 2.76 & -0.63\\
            -0.34 & -0.63&2.18
        \end{bmatrix}-\delta\begin{bmatrix}
            0 & 0 & 0\\
            0 & 0 & 0\\
            0 & 0&0.34
        \end{bmatrix}\\
        \overline{\cb}(\delta)&=\begin{bmatrix}
            -48.82 \\ -90.34 \\ -51.65
        \end{bmatrix}+\delta\begin{bmatrix}
            0 \\ 0 \\ 10.14
        \end{bmatrix}
    \end{align}
\end{subequations}
 \begin{figure}[t!]
  \centering
    \includegraphics[width=0.52\textwidth]{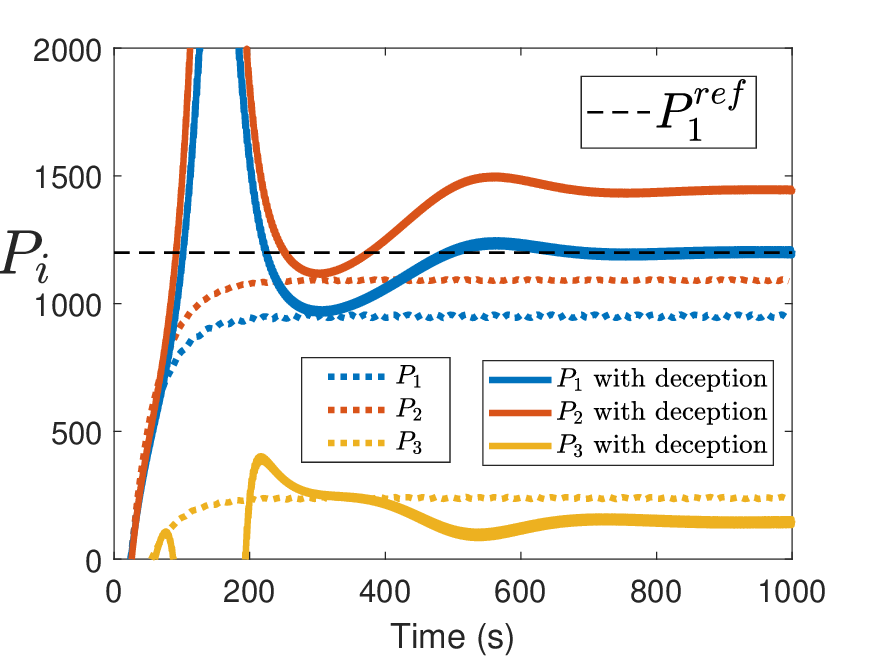}\hspace{-0.5cm}\includegraphics[width=0.52\textwidth]{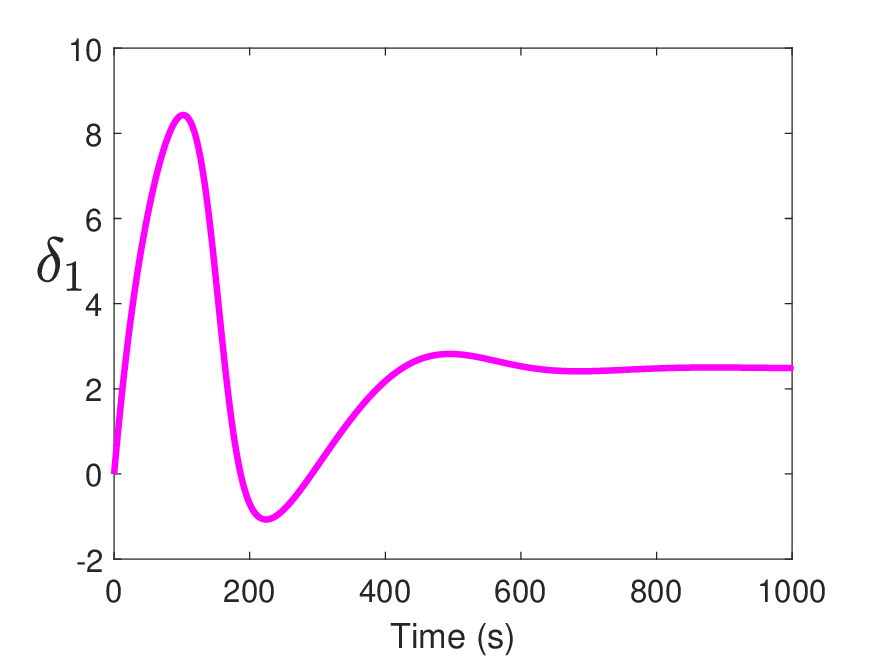}
    \caption{\small Left: The profit convergence for a three player oligopoly, where $P_i=-J_i$ and $P_1^\text{ref}=-J_1^\text{ref}$. Right: The corresponding trajectories of $\delta_1$ when player 1 deceives player 3.}\label{figex}
\end{figure}

\noindent
By numerically solving $J_1(-\overline{\cq}(\delta^*)^{-1}\overline{\cb}(\delta^*))=J_1^\text{ref}$ we obtain $\delta^*=2.486\in\Delta$. Moreover, we have $\frac{\partial}{\partial\delta}J_1(-\overline{\cq}(\delta)^{-1}\overline{\cb}(\delta))|_{\delta=\delta^*}=-190<0$, so $J_1^\text{ref}\in\Omega$. For a more intuitive illustration, we will only plot the \emph{profits} $P_i=-J_i$. As we see in the plots, player $1$ is able to force the dynamics to converge to a neighborhood of a point $x_\delta$ that achieves $J_1(x_\delta)=J_1^\text{ref}$. Moreover, we notice that the deception mechanism causes player 3 to increase their price $x_3$ significantly, which is consistent with our observations from \eqref{inflated}, \eqref{intq} and \eqref{intb} since $\delta_1^*>0$.
\section{Conclusion}
In this work, we performed a comprehensive study that applied the deception mechanism from \cite{tang2024deception} to an $N$-player \emph{oligopoly market}. By leveraging the structure of the basic two-market duopoly, we establish stability for our proposed deception mechanism in which $N$ players have different gains and exploration amplitudes for their NES dynamics. Moreover, the diagonal dominance property of the duopoly game allows us to derive an improved estimate on the ``stability-preserving set" and quantify when the deceptive Nash equilibrium is actually a NE for the corresponding oligopoly class of deceptive games. It was also shown that the deception mechanism can be economically interpreted as an artificial inflation (or deflation) of the desirability of the deceivers' product learned by the victim. Future research directions will characterize how the structure of the interaction graph between players (encoded on how the cost function $J_i$ depends on the actions of the other players) affects the emerging deceptive Nash equilibrium and its stability properties. It is also of interest to study stochastic and hybrid (involving continuous-time and discrete-time dynamics) deception mechanisms.
%
%
%
\bibliographystyle{splncs04}
\bibliography{mybibliography}

\end{document}